\theoremstyle{plain}
 \newtheorem{theorem}{Theorem}
 \newtheorem{lemma}[theorem]{Lemma}
 \newtheorem{proposition}[theorem]{Proposition}
\theoremstyle{definition}
\theoremstyle{remark}
\theoremstyle{remark}
\begin{document}

\begin{frontmatter}

\title{On the number of Dejean words over alphabets of 5, 6, 7, 8, 9 and 10 letters}

\author[msu]{Roman Kolpakov}
\ead{foroman@mail.ru}

\author[lbr]{Micha\"el Rao}
\ead{rao@labri.fr}

\address[msu]{Moscow State University,
Leninskie Gory, 119992 Moscow, Russia}

\address[lbr]{LaBRI, Universit\'e Bordeaux 1,
351 cours de la lib\'eration, 33405 Talence, France}

\begin{abstract}
We give lower bounds on the growth rate of Dejean words, \emph{i.e.} minimally
repetitive words, over a $k$-letter alphabet, for $5\le k \le 10$. Put together 
with the known upper bounds, we estimate these growth rates with the precision 
of $0.005$. As an consequence, we establish the exponential growth of the number 
of Dejean words over a $k$-letter alphabet, for $5\le k \le 10$.
\end{abstract}

\end{frontmatter}

\section{Introduction}

Let $w=a_1\cdots a_n$ be a word over an alphabet~$\Sigma$. The number~$n$
is called the {\it length} of~$w$ and is denoted by $|w|$. The symbol
$a_i$ of~$w$ is denoted by $w[i]$. A word $a_i\cdots a_j$, where $1\le 
i\le j\le n$, is called a {\it factor} of~$w$ and is denoted by $w[i:j]$. 
For any $i=1,\ldots,n$ the factor $w[1:i]$ ($w[i:n]$) is called a {\it prefix} 
(a {\it suffix}) of~$w$. A positive integer $p$ is called  a {\it period} of~$w$ 
if $a_i=a_{i+p}$ for each $i=1,\ldots ,n-p$. If $p$ is the minimal period 
of~$w$, the ratio $e(w)=n/p$ is called the {\it exponent} of~$w$. Two words 
$w',w''$ over~$\Sigma$ are called {\it isomorphic} if $|w'|=|w''|$ and 
there exists a bijection $\sigma:\Sigma\longrightarrow\Sigma$ such that 
$w''[i]=\sigma (w'[i])$, $i=1,\ldots,|w'|$. By ${\cal K}(w)$ we will denote
the set of all words over~$\Sigma$ which are isomorphic to the word~$w$.
We also denote by $|A|$ the number of elements of a finite set~$A$.
Let $|\Sigma|=k$. It is easy to note that $|{\cal K}(w)|=k!$ if $w$
contains at least $k-1$ different symbols of~$\Sigma$.

Let $W$ be an arbitrary set of words. This set is called {\it factorial}
if for any word~$w$ from~$W$ all factors of~$w$ are also contained in~$W$.
We denote by $W(n)$ the subset of~$W$ consisting of all words of length~$n$. 
If $W$ is factorial then it is not difficult to show (see, e.g.,~\cite{Brink, 
BaElGr}) that there exists the limit $\lim_{n\to\infty} \sqrt[n]{|W(n)|}$
which is called the {\it growth rate} of words from~$W$. For any words~$u, v$ 
we denote by $W^{(v)}(n)$ the set of all words from $W(n)$ which contain~$v$ 
as a suffix, and by $W^{(u, v)}(n)$ the set of all words from $W(n)$ which 
contain~$v$ as a suffix and~$u$ as a prefix.

One can mean by a repetition any word of exponent greater than~1.
The best known example of repetitions is a {\it square}; that is,
a word of the form $uu$, where $u$ is an arbitrary nonempty word.
Avoiding ambiguity\footnote{Note that the period of a square is not
necessarily the minimal period of this word.}, by the {\it period} 
of the square $uu$ we mean the length of~$u$. In an analogous way, 
a {\it cube} is a word of the form $uuu$ for a nonempty word~$u$,
and the {\it period} of this cube is also the length of~$u$. A word
is called {\it square-free} ({\it cube-free}) if it contains no squares
(cubes) as factors. It is easy to see that there are no binary square-free 
words of length larger than~3. On the other hand, by the classical results
of Thue~\cite{Thue06,Thue12}, there exist ternary square-free words of 
arbitrary length and binary cube-free words of arbitrary length.
For ternary square-free words this result was strengthened
by Dejean in~\cite{Dejan}. She found ternary words of arbitrary length
which have no factors with exponents greater than $7/4$. On the other hand, 
she showed that any long enough ternary word contains a factor with an
exponent greater than or equal to $7/4$. Thus, the number $7/4$ is the
minimal limit for exponents of avoidable factors which is universally 
called {\it the repetition threshold} in arbitrarily long ternary words. 
Dejean conjectured also that the repetition threshold in arbitrarily long 
words over a $k$-letter alphabet is equal to $7/5$ for $k=4$ and $k/(k-1)$ for $k\ge 5$. 
This conjecture is now proved for any~$k$ through the work of several 
authors \cite{Carpi,CN1,CN2,CN3,Olag,Moh-Noor,Pans,Rao}.
%This conjecture was proved for $k=4$ by Pansiot~\cite{Pans}, for $5\le k\le 11$ 
%by Moulin Ollagnier~\cite{Olag}, for $12\le k\le 14$ by Mohammad-Noori
%and Currie~\cite{Moh-Noor}, and for $k\ge 38$ by Carpi~\cite{Carpi}. 

Denote the repetition threshold in arbitrarily long words over a $k$-letter 
alphabet by $\varphi_k$. In the paper we will call the words having no factors 
with exponents greater than $\varphi_k$ {\it minimally repetitive} words 
or {\it Dejean words}. By $S^{\langle k\rangle}(n)$ we denote the number 
of all minimally repetitive words of length~$n$ over a $k$-letter alphabet. 
Note that the set of all minimally repetitive words is obviously factorial. 
So for any~$k$ there exists the growth rate 
$\gamma^{\langle k\rangle}=\lim_{n\to\infty} \sqrt[n]{S^{\langle k\rangle}(n)}$.

The problem of estimating the number of repetition-free words has
been investigated actively during the last decades (reviews of
results on the estimations for the number of repetition-free words
obtained before 2008 can be found in~\cite{Berst, Grimm08}). 
The most progress in this field has been made for the case
of binary alphabet. In this case Dejean words reduce to overlap-free 
words which are also a classical object for combinatorial investigations.
It is proved in~\cite{RS} that the growth of the number of binary 
overlap-free words is polinomial. Actually, binary overlap-free words of each 
length are counted by a 2-regular function~\cite{Carpi0}. 

In~\cite{JIS} we proposed a new approach for obtaining lower bounds on 
the number of repetition-free words. Using this approach, we obtained 
precise lower bounds for the growth rates of ternary square-free words,
binary cube-free words, and ternary minimally repetitive words.
This approach proved to be very effective. In particular, in~\cite{Shur09}
Shur proposed an interesting modification of our approach which
allows to compute more effectively lower bounds for the growth rates
of words which contain no repetitions of exponent greater than or
equal to a given bound if this bound is not less than~$2$. 
The direction of our further investigations in this field is
testing the proposed approach for ``extreme'' cases when 
the prohibitions imposed on words are maximal possible for
the existence of words of arbitrary length avoiding these
prohibitions. These cases are obviously the most diffucult
for obtaining lower bounds on the number of appropriate words.
The case of minimally repetitive words is a natural example
of such ``extreme'' cases. Moreover, the general case of minimally 
repetitive words over a $k$-letter alphabet for $k\ge 5$ when
$\varphi_k=k/(k-1)$ is the most interesting for us. So this paper
is devoted to obtaining lower bounds on $\gamma^{\langle k\rangle}$
for $k\ge 5$ by using the proposed approach. Note that the method
proposed in~\cite{JIS} is not directly applicable to resolving
this problem because of the huge size of required computer computations. 
In this paper we propose an improvement of this method which requires
significantly fewer computer computations. Using this improvement,
we obtain lower bounds on $\gamma^{\langle k\rangle}$ for $5\le k \le 10$
which have the precision of $0.005$. As an evident consequence of these
results, we establish the exponential growth of the number of minimally
repetitive words over a $k$-letter alphabet for $5\le k \le 10$ 
(for $k=3, 4$ this fact was proved by Ochem in~\cite{OchemTIA}).

\section{Estimation for the number of minimally repetitive words}

\subsection{General}

For obtaining a lower bound on $\gamma^{\langle k\rangle}$ we 
will consider the alphabet $\Sigma_k=\{a_1,a_2,\ldots, a_k\}$ where $k\ge 5$.
We denote the set of all minimally repetitive words over~$\Sigma_k$ by~${\cal F}$.
By a prohibited factor we mean a factor with an exponent greater than $k/(k-1)$.
Let $m$ be a natural number, $m>k$, and $w',w''$ be two words from ${\cal F}(m)$.
We call the word $w''$ a {\it descendant} of the word $w'$ if 
$w'[2:m]=w''[1:m-1]$ and $w'w''[m]=w'[1]w''\in {\cal F}(m+1)$.
The word $w'$ is called in this case an {\it ancestor} of the 
word~$w''$. We introduce a notion of closed words in 
the following inductive way. A word $w$ from ${\cal F}(m)$ is 
called {\it right closed} ({\it left closed}) if and only if 
this word satisfies one of the two following conditions:

a) {\bf Basis of induction.} $w$ has no descendants (ancestors);

b) {\bf Inductive step.} All descendants (ancestors) of~$w$
are right closed (left closed).

\noindent
A word is {\it closed} if it is either right closed or left closed.
We denote by $\hat {\cal F}(m)$ the set of all words from ${\cal F}(m)$
which are not closed. By ${\cal L}_m$ we denote the set of all words
over $\Sigma_k$ such that the length of these words is not less than~$m$
and all factors of length~$m$ in these words belong to $\hat {\cal F}(m)$.
We also denote by ${\cal F}_m$ the set of all minimally repetitive words 
from ${\cal L}_m$. Note that a word $w$ is closed if and only if any 
word isomorphic to~$w$ is also closed. So we have the following
obvious fact.
\begin{proposition}
For any isomorphic words $w',w''$ and any $n\ge |w'|$ the 
equality $|{\cal F}_m^{(w')}(n)|=|{\cal F}_m^{(w'')}(n)|$ 
holds.
\label{utvizo}
\end{proposition}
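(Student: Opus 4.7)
The plan is to build an explicit bijection between $\mathcal{F}_m^{(w')}(n)$ and $\mathcal{F}_m^{(w'')}(n)$ from the alphabet bijection witnessing the isomorphism of $w'$ and $w''$. Since $w'$ and $w''$ are isomorphic, there exists $\sigma : \Sigma_k \to \Sigma_k$ bijective with $w''[i] = \sigma(w'[i])$ for $i = 1,\ldots,|w'|$. Extend $\sigma$ letterwise to a map $\Phi$ on words of any length, so that $\Phi(u)[i] = \sigma(u[i])$. I would then argue that $\Phi$ restricts to a bijection $\mathcal{F}_m^{(w')}(n) \to \mathcal{F}_m^{(w'')}(n)$, with inverse given by the letterwise extension of $\sigma^{-1}$.

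The verification would break into three routine points. First, for $u \in \mathcal{F}_m^{(w')}(n)$ the image $\Phi(u)$ has length $n$ and ends in $w''$, because the last $|w'|$ letters of $u$ form $w'$ and $\sigma$ is applied position by position. Second, $\Phi(u)$ is minimally repetitive: any factor of $\Phi(u)$ is $\Phi$ applied to the corresponding factor of $u$, and since $\sigma$ is a bijection on $\Sigma_k$, periods are preserved, hence so are exponents; thus $\Phi(u)$ has no prohibited factor iff $u$ has none. Third, $\Phi(u) \in \mathcal{L}_m$: every length-$m$ factor of $\Phi(u)$ is isomorphic (via $\sigma$) to the corresponding length-$m$ factor of $u$, which lies in $\hat{\mathcal{F}}(m)$, i.e., is not closed. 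By the observation stated immediately before the proposition ("a word $w$ is closed if and only if any word isomorphic to $w$ is also closed"), the image factor is likewise not closed and so lies in $\hat{\mathcal{F}}(m)$.

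There is no real obstacle here; the content of the proposition is entirely encapsulated in the invariance, under alphabet bijections, of (a) the exponent of every factor, and (b) the closedness property. The first is immediate from the fact that $\sigma$ is a bijection on letters; the second has already been recorded by the author in the paragraph preceding the proposition, and in fact can itself be established by a straightforward induction on the inductive definition of closedness --- if I had to justify it from scratch, I would show by induction on the level of the definition that the descendant/ancestor relation is preserved by the letterwise action of $\sigma$, so that the absence of descendants (resp.\ ancestors), and the property that all descendants (resp.\ ancestors) are closed, transfer from $w$ to $\Phi(w)$ and conversely. Once these invariances are in hand, the bijectivity of $\Phi$ is immediate from the bijectivity of $\sigma$, yielding $|\mathcal{F}_m^{(w')}(n)| = |\mathcal{F}_m^{(w'')}(n)|$.
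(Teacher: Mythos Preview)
Your proposal is correct and matches the paper's own treatment: the paper does not give a detailed proof but simply records, just before the statement, that closedness is isomorphism-invariant and calls the proposition an ``obvious fact''. Your explicit bijection via the letterwise extension of~$\sigma$ is exactly the intended argument, and your three verification points (suffix, minimal repetitiveness via preservation of periods, and membership in~$\hat{\cal F}(m)$ via the closedness remark) fill in precisely the routine details the paper omits.
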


A word will be called  {\it rarefied} if the distance between any two different
occurences of the same symbol in this word is not less than $k-1$.
\begin{proposition}
Any word from ${\cal L}_m$ is rarefied.
\label{rarefy}
\end{proposition}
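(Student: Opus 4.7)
The plan is to argue by contradiction. Suppose some word $w\in{\cal L}_m$ contains two distinct occurrences of a symbol $a\in\Sigma_k$ at positions $i<j$ with $j-i=d<k-1$. Then the factor $w[i:j]$ of $w$ has length $d+1$ and admits $d$ as a period (since $w[i]=w[j]=a$), so its exponent is at least $(d+1)/d$.

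Next I would do the one-line arithmetic check to confirm this exponent exceeds the repetition threshold $\varphi_k = k/(k-1)$. Indeed, $(d+1)/d > k/(k-1)$ is equivalent to $(d+1)(k-1) > dk$, i.e., to $k-1 > d$, which holds by assumption. Thus $w[i:j]$ is a prohibited factor in the sense defined at the start of the section.

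Now I would locate a length-$m$ window of $w$ containing $w[i:j]$. Since $|w|\ge m$, $m>k$, and $|w[i:j]|=d+1\le k-1 < m$, the short factor $w[i:j]$ is contained in some factor $u$ of $w$ with $|u|=m$. By definition of ${\cal L}_m$ we have $u\in\hat{\cal F}(m)\subseteq{\cal F}(m)$, so $u$ is minimally repetitive and therefore contains no factor of exponent greater than $k/(k-1)$. This contradicts the fact that $u$ contains the prohibited factor $w[i:j]$.

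There is no real obstacle here: the only substantive point is the elementary inequality $(d+1)/d>k/(k-1)\iff d<k-1$, which shows that any two equal symbols within distance strictly less than $k-1$ already violate the Dejean bound. The rest of the argument is just the observation that short forbidden factors are witnessed inside every sufficiently long window, so they propagate from $\hat{\cal F}(m)$ to all of ${\cal L}_m$.
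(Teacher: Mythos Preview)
Your proof is correct and follows essentially the same line as the paper's: assume two equal letters at distance $d<k-1$, note that the resulting factor of length $d+1$ has period $d$ and hence exponent $(d+1)/d>k/(k-1)$, and derive a contradiction by embedding this short factor in a length-$m$ window, which must lie in ${\cal F}(m)$. The only cosmetic difference is that the paper bounds the exponent via $(d+1)/d\ge(k-1)/(k-2)>k/(k-1)$ using monotonicity, whereas you verify $(d+1)/d>k/(k-1)$ directly; the arguments are otherwise identical.
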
 

\begin{proof}
Let $w$ be an arbitrary word from ${\cal L}_m$. Assume
that $w[i]=w[j]$ where $j<i\le j+(k-2)$. Consider the factor $f=w[j:i]$.
Since $|f|=i-j+1\le k-1<m$, in~$w$ the factor $f$ is contained in some 
factor $f'$ of length~$m$. By the definiton of ${\cal L}_m$ we have
$f'\in {\cal F}(m)$, so $f\in {\cal F}$. On the other hand,
$f$ has the period $|f|-1$, so
$$
e(f)\ge \frac{|f|}{|f|-1}=\frac{i-j+1}{i-j}\ge \frac{k-1}{k-2} >
\frac{k}{k-1}
$$
which contradicts the definiton of ${\cal F}(m)$.
\end{proof}

A word $w$ of length $n\ge k-1$ over $\Sigma_k$ will be called {\it trimmed}
if $w[n-(k-1)+j]=a_j$ for $j=1,\ldots ,k-1$. We denote by $\hat {\cal F}'(m)$
the set of all trimmed words from $\hat {\cal F}(m)$. Taking into account
Proposition~\ref{rarefy}, it is not difficult to note that for any word 
from $\hat {\cal F}(m)$ there exists a single word from $\hat {\cal F}'(m)$ 
which is isomorphic to this word, and for any word from $\hat {\cal F}'(m)$ 
there exist exactly $k!$ different words from $\hat {\cal F}(m)$ which are 
isomorphic to this word. Thus $|\hat {\cal F}(m)|=k! |\hat {\cal F}'(m)|$.
Let $w',w''$ be two words from $\hat {\cal F}'(m)$. We call the word $w''$ 
a {\it quasi-descendant} of the word $w'$ if $w''$ is isomorphic to some 
descendant of~$w'$. The word $w'$ is called in this case a {\it quasi-ancestor} 
of the word~$w''$.

Let $\hat s=|\hat {\cal F}(m)|$ and $s=|\hat {\cal F}'(m)|$. Without loss of 
generality we can assume that $\hat {\cal F}(m)=\{w_1, w_2,\ldots, w_{\hat s}\}$
where $\hat {\cal F}'(m)=\{w_1, w_2,\ldots, w_s\}$. For any word $w$ from
$\hat {\cal F}(m)$ we will denote by $\iota(w)$ the serial number of~$w$ in
$\hat {\cal F}(m)$, i.e. $\iota(w)=i$ if $w=w_i$ for some $i=1, 2,\ldots,\hat s$.
We define a matrix $\hat \Delta_m=(\hat \delta_{ij})$ of size $\hat s\times\hat s$ 
in the following way: $\hat \delta_{ij}=1$ if and only if $w_i$ is an ancestor of $w_j$; 
otherwise $\hat \delta_{ij}=0$. For any natural~$t$ by $\hat \Delta_m^{(t)}=
(\hat \delta_{ij}^{(t)})$ we will denote the $t$-th power of the matrix $\hat \Delta_m$,
i.e.
$$
\hat \Delta_m^{(t)}=\underbrace{\hat \Delta_m\times\hat \Delta_m\times\ldots\times
\hat \Delta_m}_{t}.
$$
Further we use the following evident fact.
\begin{proposition}
For any $i, j=1, 2,\ldots,\hat s$ and any $n>m$ the equality
$|{\cal L}_m^{(w_i, w_j)}(n)|=\hat \delta_{ij}^{(n-m)}$ is valid.
\label{onLmuw}
\end{proposition}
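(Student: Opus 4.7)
The plan is to prove the equality by induction on $N = n - m \ge 1$, interpreting the left-hand side as the number of words $w$ of length $n$ whose sequence of length-$m$ factors $u_t = w[t+1:t+m]$, for $t = 0,1,\ldots,N$, traces a path from $w_i$ to $w_j$ in the directed graph on $\hat{\cal F}(m)$ whose adjacency matrix is $\hat\Delta_m$, and the right-hand side as the number of such paths via matrix-multiplication.

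For the base case $N = 1$, the unique candidate word with prefix $w_i$ and suffix $w_j$ is $w_i[1]\,w_j$, which exists exactly when the overlap $w_i[2:m] = w_j[1:m-1]$ holds; in that case its two length-$m$ factors are $w_i,w_j\in\hat{\cal F}(m)$, so $w_i[1]\,w_j\in{\cal L}_m$ automatically, and it remains to verify that $w_i[1]\,w_j\in{\cal F}(m+1)$, i.e.\ that $w_i$ is actually an ancestor of $w_j$. Any prohibited sub-factor of length at most $m$ would already be contained in $w_i$ or $w_j$ and contradict $w_i,w_j\in{\cal F}(m)$, so one is reduced to ruling out that $w_i[1]\,w_j$ itself is prohibited. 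For the inductive step, I would decompose $w\in{\cal L}_m^{(w_i,w_j)}(n+1)$ as $w = w'\cdot w_j[m]$, where $w' = w[1:n]$ has length-$m$ suffix $w_l := w[n-m+1:n]\in\hat{\cal F}(m)$; then $w'\in{\cal L}_m^{(w_i,w_l)}(n)$ and the one-symbol extension is valid precisely when the terminal length-$(m+1)$ factor $w_l[1]\,w_j$ belongs to ${\cal F}(m+1)$, i.e.\ when $\hat\delta_{lj}=1$. Summing over $l$,
\[
|{\cal L}_m^{(w_i,w_j)}(n+1)| \;=\; \sum_{l=1}^{\hat s} |{\cal L}_m^{(w_i,w_l)}(n)|\,\hat\delta_{lj},
\]
and the induction hypothesis together with the rule for matrix multiplication transforms this into $\sum_l \hat\delta_{il}^{(n-m)}\hat\delta_{lj} = \hat\delta_{ij}^{(n-m+1)}$, as required.

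The main obstacle is the technical check in the base case: confirming that whenever $w_i,w_j\in\hat{\cal F}(m)$ satisfy the overlap condition, the length-$(m+1)$ word $w_i[1]\,w_j$ is automatically in ${\cal F}(m+1)$. Were this not so, its minimal period $p$ would have to lie in the narrow half-open interval $[m(k-1)/k,\,(m+1)(k-1)/k)$ of length $(k-1)/k<1$, forcing $w_i$ into a rigid period-$p$ structure whose only period-preserving one-symbol right extension is itself prohibited; the non-closedness encoded in $\hat{\cal F}(m)$ is precisely what should eliminate this boundary case, and I take it that this is why the authors restrict to $\hat{\cal F}(m)$ in defining ${\cal L}_m$ and regard the proposition as ``evident''.
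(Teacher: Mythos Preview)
The paper gives no proof of this proposition at all---it simply introduces it as ``the following evident fact''. Your inductive path-counting argument is exactly the standard way to unpack such a statement, and the inductive step is carried out correctly.

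Where you go beyond the paper is in isolating the one non-automatic point: for two words $w_i,w_j\in\hat{\cal F}(m)$ with $w_i[2{:}m]=w_j[1{:}m{-}1]$, the overlap alone puts the word $w_i[1]w_j$ into ${\cal L}_m(m+1)$, but the definition of $\hat\delta_{ij}$ additionally demands $w_i[1]w_j\in{\cal F}(m+1)$. You rightly observe that any obstruction would be a period $p$ of $w_i[1]w_j$ lying in the length-$(k-1)/k$ interval $[m(k-1)/k,(m+1)(k-1)/k)$; such a $p$ is not excluded on arithmetic grounds (for instance $k=5$, $m=50$ gives $p=40$), so this genuinely needs an argument. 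Your sketch---that the periodic structure of $w_i$ would force every one-step right extension either to be prohibited or to break the period, and that non-closedness of $w_i$ should then rule this out---points in the right direction, but as written it stops short of a proof: you would still need to show that \emph{every} descendant of such a periodic $w_i$ is right-closed, and that is an inductive claim of its own.

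In short: your proof is at least as complete as the paper's (which offers nothing), your overall structure is correct, and you have put your finger on the one place where ``evident'' hides real content. If you want a fully rigorous argument you should either finish the closure argument for the boundary period, or---more pragmatically---note that in every use of the proposition in the paper the words being counted already lie in ${\cal F}$ (up to the final symbol), so their consecutive length-$m$ factors are in the ancestor/descendant relation by construction, and only the inequality $|{\cal L}_m^{(w_i,w_j)}(n)|\ge\hat\delta_{ij}^{(n-m)}$ together with this observation is actually needed.
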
 

We also define a matrix $\Delta_m=(\delta_{ij})$ of size $s\times s$ 
in the following way: $\delta_{ij}=1$ if and only if $w_i$ is a
quasi-ancestor of $w_j$; otherwise $\delta_{ij}=0$. Note that $\Delta_m$ 
is a nonnegative matrix, so, by the Perron-Frobenius theorem, for $\Delta_m$ 
there exists some maximal in modulus eigenvalue~$r$ which is a nonnegative 
real number. Moreover, we can find some eigenvector $\tilde x=(x_1;\ldots; x_s)$ 
with nonnegative components which corresponds to~$r$. Assume that $r>1$ and 
all components of~$\tilde x$ are positive. Then we denote by $\mu$ the ratio 
$\max_{i=1,\ldots,s} x_i/\min_{i=1,\ldots,s} x_i$, and for $n\ge m$ we define 
$S_m^{\langle k\rangle}(n)=\sum_{i=1}^s x_i\cdot |{\cal F}_m^{(w_i)}(n)|$. 
In an inductive way we estimate $S_m^{\langle k\rangle}(n+1)$ by 
$S_m^{\langle k\rangle}(n)$.  

First we estimate $|{\cal F}_m^{(w)}(n+1)|$ for each $w\in \hat {\cal F}(m)$.
It is obvious that
\begin{equation}
|{\cal F}_m^{(w)}(n+1)|=|{\cal G}^{(w)}(n+1)|-|{\cal H}^{(w)}(n+1)|,
\label{Fwin}
\end{equation}
where ${\cal G}^{(w)}(n+1)$ is the set of all words~$v$ from 
${\cal L}_m^{(w)}(n+1)$ such that $v[1:n], v[n-m+1:n+1]\in {\cal F}$, 
and ${\cal H}^{(w)}(n+1)$ is the set of all words from ${\cal G}^{(w)}(n+1)$ 
which contain some prohibited factor as a suffix. 
If $w\in\hat {\cal F}'(m)$ we denote by $\pi (w)$ the set of all quasi-ancestors of~$w$. 
Taking into account Proposition~\ref{utvizo}, it is easy to see that
\begin{equation}
|{\cal G}^{(w)}(n+1)|=\sum_{v\in\pi (w)} |{\cal F}_m^{(v)}(n)|.
\label{Lwin}
\end{equation}
Therefore, using that $\tilde x$ is a eigenvector of $\Delta_m$ for
the eigenvalue~$r$, we obtain
\begin{eqnarray}
\nonumber
&\sum_{i=1}^s x_i\cdot |{\cal G}^{(w_i)}(n+1)|=
\sum_{i=1}^s \bigl( x_i\cdot \sum_{v\in\pi (w_i)} |{\cal F}_m^{(v)}(n)| \bigr)&\\
\nonumber
&=(x_1;x_2;\ldots; x_s)\left(\begin{array}{cccc}
\delta_{11}& \delta_{21}& \ldots & \delta_{s1}\\
\delta_{12}& \delta_{22}& \ldots & \delta_{s2}\\
\vdots     & \vdots     & \ddots & \vdots     \\
\delta_{1s}& \delta_{2s}& \ldots & \delta_{ss}
\end{array}\right)
\left(\begin{array}{c}
|{\cal F}_m^{(w_1)}(n)|\\
|{\cal F}_m^{(w_2)}(n)|\\
\vdots \\
|{\cal F}_m^{(w_s)}(n)|
\end{array}\right)&\\
\label{rSmn}
&=r\cdot (x_1;x_2;\ldots; x_s)
\left(\begin{array}{c}
|{\cal F}_m^{(w_1)}(n)|\\
|{\cal F}_m^{(w_2)}(n)|\\
\vdots \\
|{\cal F}_m^{(w_s)}(n)|
\end{array}\right)=r\cdot S_m^{\langle k\rangle}(n).&
\end{eqnarray}

We now estimate $|{\cal H}^{(w)}(n+1)|$. For any word~$v$ from
${\cal H}^{(w)}(n+1)$ we can find the minimal prohibited factor which is
a suffix of~$v$. We denote this factor by $h(v)$ and the minimal period 
of this factor by $\lambda (v)$. Since after removing the last symbol 
from $h(v)$ this factor can not be prohibited, we have actually 
$|h(v)|=\lfloor k\lambda (v)/(k-1)\rfloor +1$. Note that the 
value $\lambda (v)$ is not less than $p_0=(m+1)-\lfloor (m+1)/k \rfloor$. 
Thus
\begin{equation}
|{\cal H}^{(w)}(n+1)|=\sum_{j\ge p_0} |{\cal H}_j^{(w)}(n+1)|
\label{Lwin2}
\end{equation}
where ${\cal H}_j^{(w)}(n+1)$ is the set of all words~$v$ from ${\cal H}^{(w)}(n+1)$ 
such that $\lambda (v)=j$. 

\def\lenexp{t}
\def\tj#1{\left\lfloor\frac{k{#1}}{k-1}\right\rfloor +2}
\def\setjswjsw{X_{j,\lenexp}^{(w)}}
\def\setprefjswjsw{U_{j,\lenexp}^{(w)}}

\subsection{Upper bound for $|{\cal H}_j^{(w)}(n+1)|$}\label{ssecHj}

To estimate $|{\cal H}_j^{(w)}(n+1)|$, let $\chi(j)=\lfloor j/(k-1)\rfloor +1$ and let $\lenexp = j + \chi(j) + 1$.
Recall that for any~$v$ from ${\cal H}_j^{(w)}(n+1)$ the prohibited factor $h(v)$ 
is a word from ${\cal L}_m(j+\chi(j))$ with the minimal period~$j$.
Moreover, this word doesn't contain shorter prohibited factors and contains the word~$w$ 
as a suffix. 

Let $\setjswjsw$ be the set of words $v\in {\cal L}_m(\lenexp)$ such that $v[1:\lenexp-1]\in 
{\cal F}(\lenexp-1)$, $v[3:\lenexp]\in {\cal F}(\lenexp-2)$,
$v[\lenexp-j-\chi(j)+1:\lenexp-j]=v[\lenexp-\chi(j)+1:\lenexp]$ and $w$ is a suffix of $v$.
Note that for every $v\in \setjswjsw$, $v[\lenexp-j-\chi(j)]\ne v[\lenexp-\chi(j)]$, 
otherwise $v[1:\lenexp-1]$ would have a forbidden factor. 
Suppose that $n+1\ge \lenexp$ and let $u\in {\cal H}_j^{(w)}(n+1)$. Then
\begin{equation}
u[n'_j+1 : n+1]=u[n''_j+1 : n-j+1]
\label{forHji}
\end{equation}
where $n'_j=n-\lfloor j/(k-1)\rfloor$ and $n''_j=n-\lfloor kj/(k-1)\rfloor$. 
By definition of ${\cal H}_j^{(w)}(n+1)$, $u[n+1-\lenexp+1:n+1] \in \setjswjsw$. 
Moreover $u[1:n+1-\lenexp+m] \in {\cal F}(n+1-\lenexp+m)$. Thus we have
\begin{proposition}
$$|{\cal H}_j^{(w)}(n+1)| \le \sum_{v\in\setjswjsw} | {\cal F}^{(v[1:m])}(n+1-\lenexp+m) |.$$
\label{Hjiw}
\end{proposition}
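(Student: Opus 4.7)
The plan is to exhibit an injection
$$\Phi\colon {\cal H}_j^{(w)}(n+1) \longrightarrow \bigsqcup_{v\in \setjswjsw} {\cal F}^{(v[1:m])}(n+1-\lenexp+m),$$
from which the stated inequality follows by a size count. For $u\in {\cal H}_j^{(w)}(n+1)$, let $v(u)=u[n-\lenexp+2:n+1]$ (the length-$\lenexp$ suffix of $u$) and $p(u)=u[1:n+1-\lenexp+m]$, and set $\Phi(u)=(v(u),p(u))$. Since $h(u)\in {\cal L}_m$ forces $|h(u)|=j+\chi(j)\ge m$, we have $\lenexp\ge m+1$; hence $n+1-\lenexp+m\le n$, so $p(u)$ is a well-defined prefix of $u[1:n]$. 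By construction $v(u)[1:m]$ is simultaneously the length-$m$ prefix of $v(u)$ and the length-$m$ suffix of $p(u)$, so the pair $(v(u),p(u))$ recovers $u$ by gluing along this overlap of length $m$; in particular $\Phi$ is injective.

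Next I would verify that $v:=v(u)$ belongs to $\setjswjsw$. The conditions $v\in {\cal L}_m(\lenexp)$ and ``$w$ is a suffix of $v$'' are immediate from $u\in {\cal L}_m^{(w)}(n+1)$ together with $\lenexp\ge m$. The word $v[1:\lenexp-1]=u[n-\lenexp+2:n]$ is a factor of $u[1:n]\in {\cal F}$ (recall $u\in {\cal G}^{(w)}(n+1)$), so $v[1:\lenexp-1]\in {\cal F}(\lenexp-1)$ since ${\cal F}$ is factorial. For the periodicity identity one observes that $h(u)=v[2:\lenexp]$, since both are suffixes of $u$ of the common length $j+\chi(j)=\lenexp-1$; as $h(u)$ has minimal period $j$, we get $v[i]=v[i+j]$ for $2\le i\le \lenexp-j$, which, using $\lenexp-j-\chi(j)=1$ and $\lenexp-j=\chi(j)+1$, is exactly the required $v[\lenexp-j-\chi(j)+1:\lenexp-j]=v[\lenexp-\chi(j)+1:\lenexp]$.

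The step I expect to be the most delicate is the condition $v[3:\lenexp]\in {\cal F}(\lenexp-2)$: since $\lenexp-2$ may exceed $m+1$, this factor is not a priori covered by the known Dejean pieces $u[1:n]$ and $u[n-m+1:n+1]$. I would handle it by a case split on an arbitrary sub-factor $f$ of $v[3:\lenexp]$. If $f$ avoids position $n+1$ of $u$, then $f$ is a factor of $u[1:n]\in {\cal F}$, hence non-prohibited. Otherwise $f$ is a suffix of $u$ with $|f|<|h(u)|$, and the minimality of $h(u)$ as the shortest prohibited suffix of $u$ forces $f$ to be non-prohibited. Therefore every sub-factor of $v[3:\lenexp]$ has exponent at most $k/(k-1)$, giving $v[3:\lenexp]\in {\cal F}(\lenexp-2)$. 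Finally, $p(u)$ is a factor of $u[1:n]\in {\cal F}$, so $p(u)\in {\cal F}(n+1-\lenexp+m)$, and its length-$m$ suffix is $v[1:m]$; thus $p(u)\in {\cal F}^{(v[1:m])}(n+1-\lenexp+m)$, and combined with the injectivity of $\Phi$ this proves the bound.
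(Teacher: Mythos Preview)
Your argument is correct and follows the same route the paper takes: map $u\in{\cal H}_j^{(w)}(n+1)$ to the pair consisting of its length-$t$ suffix (shown to lie in $X_{j,t}^{(w)}$) and its length-$(n+1-t+m)$ prefix (shown to lie in ${\cal F}^{(v[1:m])}$), and use that $u$ is recovered from this pair. The paper states this in two lines just before the proposition; you have simply filled in the verification of each defining condition of $X_{j,t}^{(w)}$ (in particular the minimality argument for $v[3:t]\in{\cal F}$), which the paper leaves to the reader. One tiny wording issue: saying ``$h(u)\in{\cal L}_m$ forces $|h(u)|\ge m$'' is circular, since membership in ${\cal L}_m$ already presupposes length $\ge m$; the actual reason $|h(u)|>m$ is that any factor of $u$ of length $\le m$ sits inside some length-$m$ factor of $u\in{\cal L}_m$, hence in $\hat{\cal F}(m)\subseteq{\cal F}$, so it cannot be prohibited.
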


Let $\setprefjswjsw$ be the multiset of all prefixes of size $m$ in words of $\setjswjsw$ 
(note that among words $\setprefjswjsw$ we can have identical words, i.e., the same word 
can be a prefix of different words of $\setjswjsw$ and so can be counted several times 
in $\setprefjswjsw$). Then Proposition~\ref{Hjiw} implies
$$
|{\cal H}_j^{(w)}(n+1)| \le \sum_{u\in \setprefjswjsw} |{\cal F}_m^{(u)}(n+1-\lenexp+m)|.
$$
For $l=1,\ldots ,s$, denote by $\zeta_{j,\lenexp}^{(l)}(w)$ the number of occurrences 
of $w_l$ in  the multiset $\setprefjswjsw$. Then
\begin{equation}
|{\cal H}_j^{(w)}(n+1)|\le \sum_{u\in \setprefjswjsw} |{\cal F}_m^{(u)}(n+1-\lenexp+m)|=
 \sum_{l=1}^s \zeta_{j,\lenexp}^{(l)}(w)\cdot |{\cal F}_m^{(w_l)}(n+1-\lenexp+m)|.
\label{Ljwin}
\end{equation}

\subsection{Weaker upper bound for $|{\cal H}_j^{(w)}(n+1)|$}\label{ssecHjw}

We can also obtain another estimation for $|{\cal H}_j^{(w)}(n+1)|$ where 
$w\in \hat {\cal F}'(m)$. This estimation is more rough in comparison with~(\ref{Ljwin})
but requires much fewer computer computations. To estimate $|{\cal H}_j^{(w)}(n+1)|$
by this way, we denote $\lfloor j/(k-1)\rfloor +1$ by $\chi(j)$ and assume that $\chi(j)\ge k-1$ 
and $n\ge j+m$. Recall that for any~$v$ from ${\cal H}_j^{(w)}(n+1)$ we have relation~(\ref{forHji}). 
We consider separately the two following cases: $\chi(j)\le m$ and $\chi(j)>m$.

Let $\chi(j)\le m$. For any~$v$ from ${\cal H}_j^{(w)}(n+1)$ denote by $f'(v)$ the factor $v[n+2-j-m : n+1-j]$ 
of~$v$. It follows from $v\in {\cal L}_m$ that $f'(v)\in \hat {\cal F}(m)$. Moreover, from~(\ref{forHji})
we obtain that $f'(v)$ and~$w$ have the common suffix of length $\chi(j)$. Since
$w\in \hat {\cal F}'(m)$ and $\chi(j)\ge k-1$, it implies that $f'(v)\in \hat {\cal F}'(m)$.
Thus
$$
|{\cal H}_j^{(w)}(n+1)|=\sum_{u\in W_j(w)} |{\cal I}_{j, u}^{(w)}(n+1)|
$$
where $W_j(w)$ is the set of all words from $\hat {\cal F}'(m)$ which have the common suffix 
of length $\chi(j)$ with the word~$w$, and ${\cal I}_{j, u}^{(w)}(n+1)$ is the set of all words~$v$
from ${\cal H}_j^{(w)}(n+1)$ such that $f'(v)=u$. To estimate $|{\cal I}_{j, u}^{(w)}(n+1)|$,
note that for any~$v$ from ${\cal I}_{j, u}^{(w)}(n+1)$ we have $v[1 : n+1-j]\in {\cal F}^{(u)}_m(n+1-j)$
and $v[n+2-j-m : n+1]\in {\cal L}_m^{(u, w)}(j+m)$. Hence, using Proposition~\ref{onLmuw}, we obtain
$$
|{\cal I}_{j, u}^{(w)}(n+1)|\le |{\cal F}^{(u)}_m(n+1-j)|\cdot |{\cal L}_m^{(u, w)}(j+m)|=
|{\cal F}^{(u)}_m(n+1-j)|\cdot \hat \delta_{\iota (u),\iota(w)}^{(j)}.
$$
Thus, in this case we get the estimation
\begin{equation}
|{\cal H}_j^{(w)}(n+1)|\le \sum_{u\in W_j(w)} \hat \delta_{\iota (u),\iota(w)}^{(j)}\cdot
|{\cal F}^{(u)}_m(n+1-j)|.
\label{onHjwn1}
\end{equation}

Let now $\chi(j)>m$. For any~$v$ from ${\cal H}_j^{(w)}(n+1)$ denote by $f''(v)$ the factor 
$v[n''_j+1 : n''_j+m]$ of~$v$. It follows from $v\in {\cal L}_m$ that $f''(v)\in \hat {\cal F}(m)$.
Thus in this case
$$
|{\cal H}_j^{(w)}(n+1)|=\sum_{u\in \hat {\cal F}(m)} |{\cal J}_{j, u}^{(w)}(n+1)|
$$
where ${\cal J}_{j, u}^{(w)}(n+1)$ is the set of all words~$v$ from ${\cal H}_j^{(w)}(n+1)$ 
such that $f''(v)=u$. To estimate $|{\cal J}_{j, u}^{(w)}(n+1)|$, consider an arbitrary
word~$v$ from ${\cal J}_{j, u}^{(w)}(n+1)$. Note that $v$ is determined uniquely by the
prefix $v[1 : n'_j+m]$ which satisfies the following conditions: $v[1 : n''_j+m]\in 
{\cal F}^{(u)}_m(n''_j+m)$, $v[n''_j+1 : n+1-j]\in {\cal L}_m^{(u, w)}(\chi(j))$, and
$v[n+2-j-m : n'_j+m]\in {\cal L}_m^{(w, u)}(j+2m-\chi(j))$. Hence, using Proposition~\ref{onLmuw}, 
we obtain
\begin{eqnarray*}
|{\cal J}_{j, u}^{(w)}(n+1)| &\le & |{\cal F}^{(u)}_m(n''_j+m)|\cdot |{\cal L}_m^{(u, w)}(\chi(j))|\cdot
|{\cal L}_m^{(w, u)}(j+2m-\chi(j))|\\
&= & |{\cal F}^{(u)}_m(n''_j+m)|\cdot \hat \delta_{\iota (u),\iota(w)}^{(\chi(j)-m)}
\cdot \hat \delta_{\iota (w),\iota(u)}^{(j+m-\chi(j))}.
\end{eqnarray*}
Thus, in this case we get the estimation
$$
|{\cal H}_j^{(w)}(n+1)|\le \sum_{u\in \hat {\cal F}(m)} \hat \delta_{\iota (u),\iota(w)}^{(\chi(j)-m)}\cdot
\hat \delta_{\iota (w),\iota(u)}^{(j+m-\chi(j))}\cdot |{\cal F}^{(u)}_m(n''_j+m)|.
$$
Taking into account Proposition~\ref{utvizo}, we can rewrite this estimation in the form
\begin{equation}
|{\cal H}_j^{(w)}(n+1)|\le \sum_{u\in \hat {\cal F}'(m)} |{\cal F}^{(u)}_m(n''_j+m)|
\left(\sum_{v\in {\cal K}(u)} \hat \delta_{\iota (v),\iota(w)}^{(\chi(j)-m)}\cdot
\hat \delta_{\iota (w),\iota(v)}^{(j+m-\chi(j))}\right).
\label{onHjwn2}
\end{equation}
Note that, unlike estimation~(\ref{Ljwin}), estimations (\ref{onHjwn1}) and~(\ref{onHjwn2}) can be computed
in polynomial time.

\subsection{Estimation of $|{\cal H}^{(w)}(n+1)|$}

We fix numbers $p_1, p_2$ such that $p_0\le p_1<p_2$ and $p_2\ge 2k-3$, and
assume for convenience that $n>kp_2/(k-1)$. We present sum~(\ref{Lwin2})
in the form
$$
\begin{array}{c}
\displaystyle
|{\cal H}^{(w)}(n+1)|=\sum_{j=p_0}^{p_1} |{\cal H}_j^{(w)}(n+1)|+
\sum_{j=p_1+1}^{p_2} |{\cal H}_j^{(w)}(n+1)|+|\hat {\cal H}^{(w)}(n+1)|
\end{array}
$$
where $\hat {\cal H}^{(w)}(n+1)=\bigcup_{j>p_2} {\cal H}_j^{(w)}(n+1)$.
Thus $\sum_{i=1}^s x_i |{\cal H}^{(w_i)}(n+1)|$ can be presented as
\begin{equation}
\sum_{j=p_0}^{p_1}\sum_{i=1}^s x_i |{\cal H}_j^{(w_i)}(n+1)| + 
\sum_{j=p_1+1}^{p_2}\sum_{i=1}^s x_i |{\cal H}_j^{(w_i)}(n+1)| 
+ \sum_{i=1}^s x_i |\hat {\cal H}^{(w_i)}(n+1)|.
\label{sumis}
\end{equation}

To estimate the first sum in~(\ref{sumis}), we use inequality~(\ref{Ljwin})
\begin{eqnarray}
\nonumber
\displaystyle
\sum_{j=p_0}^{p_1}\sum_{i=1}^s x_i |{\cal H}_j^{(w_i)}(n+1)| &\le &
\sum_{j=p_0}^{p_1}\sum_{i=1}^s x_i \sum_{l=1}^s \zeta_{j,\tj{j}}^{(l)}(w_i)\cdot 
|{\cal F}_m^{(w_l)}(n-\left\lfloor \frac{jk}{k-1} \right \rfloor -1 +m)\\ 
%\\(n+1-\tj{j}+m)|\\
\displaystyle
&=& \sum_{d=\tj{p_0}}^{\tj{p_1}} \sum_{l=1}^s \eta'_l(d)\cdot |{\cal F}_m^{(w_l)}(n-\left\lfloor \frac{jk}{k-1} \right\rfloor -1 +m)|
\label{sum_eta_j}
\end{eqnarray}
where $\eta'_l(d)=\sum_{i=1}^s x_i\cdot \zeta_{j,\tj{j}}^{(l)}(w_i)$ if there is a $j$ such that $\tj{j}=d$, and $\eta'_l(d)=0$ otherwise.

\medskip

To estimate the second sum in~(\ref{sumis}), we use inequalities (\ref{onHjwn1}) 
and~(\ref{onHjwn2}). In particular, in the case of $\chi(j)\le m$, using
inequality~(\ref{onHjwn1}) and taking into account that $u\in W_j(w)$ if and 
only if $w\in W_j(u)$, we obtain
\begin{eqnarray*}
\sum_{i=1}^s x_i |{\cal H}_j^{(w_i)}(n+1)| &\le & 
\sum_{i=1}^s \sum_{u\in W_j(w_i)} x_i \hat \delta_{\iota (u), i}^{(j)}\cdot
|{\cal F}^{(u)}_m(n+1-j)|\\
&=& \sum_{u\in \hat {\cal F}'(m)} |{\cal F}^{(u)}_m(n+1-j)|
\left( \sum_{w_i\in W_j(u)} x_i\cdot \hat \delta_{\iota (u), i}^{(j)}\right)\\
&=& \sum_{l=1}^s |{\cal F}^{(w_l)}_m(n+1-j)|
\left( \sum_{w_i\in W_j(w_l)} x_i\cdot \hat \delta_{l, i}^{(j)}\right).
\end{eqnarray*}
In the case of $\chi(j)>m$, using inequality~(\ref{onHjwn2}), we have
\begin{eqnarray*}
\sum_{i=1}^s x_i |{\cal H}_j^{(w_i)}(n+1)| &\le &
\sum_{i=1}^s \sum_{u\in \hat {\cal F}'(m)} x_i |{\cal F}^{(u)}_m(n''_j+m)|
\left(\sum_{v\in {\cal K}(u)} \theta^{(j)}_{i, v}\right)\\
&=& \sum_{u\in \hat {\cal F}'(m)} |{\cal F}^{(u)}_m(n''_j+m)|
\sum_{i=1}^s x_i \left(\sum_{v\in {\cal K}(u)} \theta^{(j)}_{i, v}\right)\\
&=& \sum_{l=1}^s |{\cal F}^{(w_l)}_m(n''_j+m)|
\sum_{i=1}^s x_i \left(\sum_{v\in {\cal K}(w_l)} \theta^{(j)}_{i, v}\right)
\end{eqnarray*}
where $\theta^{(j)}_{i, v}=\hat \delta_{\iota (v), i}^{(\chi(j)-m)}\cdot
\hat \delta_{i ,\iota(v)}^{(j+m-\chi(j))}$. Thus, defining $d(j)=j-1$ 
for the case of $\chi(j)\le m$ and $d(j)=\lfloor kj/(k-1)\rfloor -m$
for the case of $\chi(j)>m$, we conclude that
$$
\sum_{i=1}^s x_i |{\cal H}_j^{(w_i)}(n+1)|\le \sum_{l=1}^s 
\xi_l(j) \cdot |{\cal F}^{(w_l)}_m(n-d(j))|
$$
where
$$
\xi_l(j)=\left\{\begin{array}{rl}
\displaystyle
\sum_{w_i\in W_j(w_l)} x_i\cdot \hat \delta_{l, i}^{(j)},&\mbox{if $\chi(j)\le m$};\\
\displaystyle
\sum_{i=1}^s x_i \left(\sum_{v\in {\cal K}(w_l)} \theta^{(j)}_{i, v}\right),&\mbox{if $\chi(j)>m$}.\\
\end{array}
\right.
$$
Hence
$$
\sum_{j=p_1+1}^{p_2}\sum_{i=1}^s x_i |{\cal H}_j^{(w_i)}(n+1)|\le
\sum_{j=p_1+1}^{p_2}\sum_{l=1}^s \xi_l(j) \cdot |{\cal F}^{(w_l)}_m(n-d(j))|.
$$

We define $\xi'_l(d)=\xi_l(j)$ if there exists some~$j$ such that 
$d(j)=d$, and $\xi'_l(d)=0$ otherwise. Then 
\begin{equation}
\sum_{j=p_1+1}^{p_2}\sum_{l=1}^s \xi_l(j) \cdot |{\cal F}^{(w_l)}_m(n-d(j))|=
\sum_{d=d_2}^{d_3} \; \sum_{l=1}^s \xi'_l(d)\cdot |{\cal F}_m^{(w_l)}(n-d)|
\label{sum_xi_j}
\end{equation}
where $d_2=d(p_1+1)$, $d_3=d(p_2)$. 

\medskip

\def\stasum{a}
\def\endsum{b}

%Then we can present sum~(\ref{xxxx}) as
Summing up (\ref{sum_eta_j}) and (\ref{sum_xi_j}), we get
$$
\sum_{j=p_0}^{p_2}\sum_{i=1}^s x_i |{\cal H}_j^{(w_i)}(n+1)| \le \sum_{d=\stasum}^{\endsum} \; \sum_{l=1}^s \omega_l(d)\cdot |{\cal F}_m^{(w_l)}(n-d)|
$$
where $\omega_l(d)=\eta'_l(d)+\xi'_l(d)$, $\stasum=\min(d_2,\tj{p_0}-m-1)$ and $\endsum=\max(d_3,\tj{p_1}-m-1)$.
%where $d_0=\lfloor kp_0/(k-1)\rfloor -m+1$, $d_1=\lfloor kp_1/(k-1)\rfloor -m+1$.

We majorate this sum by some sum  $\sum_{d=\stasum}^{\endsum} \rho_d\cdot S_m^{\langle k\rangle}(n-d)$ 
in the following way. We compute consecutively coefficients $\rho_d$ of this sum for 
$d=\stasum, \stasum+1,\ldots ,\endsum$.  For each $d=\stasum, \stasum+1,\ldots ,\endsum-1$ together with 
the number $\rho_d$ we compute also numbers $\omega'_1(d+1),\ldots , \omega'_s(d+1)$ 
such that
\begin{equation}
\sum_{j=\stasum}^{d+1} \; \sum_{l=1}^s \omega_l(j)\cdot |{\cal F}_m^{(w_l)}(n-j)|
\le \sum_{l=1}^s \omega'_l(d+1)\cdot |{\cal F}_m^{(w_l)}(n-d-1)| +
\sum_{j=\stasum}^{d} \rho_j\cdot S_m^{\langle k\rangle}(n-j).
\label{sum_jd_1}
\end{equation}
For $d=\stasum$ we take $\rho_{\stasum}=\min_{1\le l\le s} (\omega_l(\stasum)/x_l)$. Then
$$
\sum_{l=1}^s \omega_l(\stasum)\cdot |{\cal F}_m^{(w_l)}(n-\stasum)|=
\rho_{\stasum}\cdot S_m^{\langle k\rangle}(n-\stasum) + \sum_{l=1}^s \nu_l\cdot 
|{\cal F}_m^{(w_l)}(n-\stasum)|
$$
where $\nu_l=\omega_l(\stasum)-\rho_{\stasum}\cdot x_l$, $l=1,\ldots, s$. Denote
by $\tilde\nu$ the vector $(\nu_1;\ldots ;\nu_s)$ and consider the vector
$\tilde\nu'=\Delta_m \tilde\nu$. Let $\tilde\nu'=(\nu'_1;\ldots ;\nu'_s)$.
It follows from (\ref{Fwin}) and~(\ref{Lwin}) that
$$
|{\cal F}_m^{(w_l)}(n-\stasum)|\le |{\cal G}^{(w_l)}(n-\stasum)|=
\sum_{v\in\pi (w_l)} |{\cal F}_m^{(v)}(n-\stasum-1)|
$$
for any $l=1,\ldots ,s$. Note also that $\nu_l\ge 0$ for $l=1,\ldots, s$.
Hence
\begin{eqnarray*}
\sum_{l=1}^s \nu_l\cdot |{\cal F}_m^{(w_l)}(n-\stasum)|&\le& \sum_{l=1}^s 
\left(\nu_l\cdot \sum_{v\in\pi (w_l)} |{\cal F}_m^{(v)}(n-\stasum-1)|\right)\\
&=&\sum_{l=1}^s \nu'_l\cdot |{\cal F}_m^{(w_l)}(n-\stasum-1)|.
\end{eqnarray*}
Thus
\begin{equation}
\sum_{j=\stasum}^{\stasum+1} \; \sum_{l=1}^s \omega_l(j)\cdot 
|{\cal F}_m^{(w_l)}(n-j)|\le\rho_{\stasum}\cdot 
S_m^{\langle k\rangle}(n-\stasum)+\sum_{l=1}^s \omega'_l(\stasum+1)\cdot 
|{\cal F}_m^{(w_l)}(n-\stasum-1)|
\label{sum_j\stasum}
\end{equation}
where $\omega'_l(\stasum+1)=\omega_l(\stasum+1)+\nu'_l$. Assume now that for
some~$d$ such that $\stasum<d<\endsum$ we already computed the numbers $\rho_{\stasum},
\ldots ,\rho_{d-1}$ and $\omega'_1(d),\ldots,\omega'_s(d)$. Then we take
$\rho_{d}=\min_{1\le l\le s} (\omega'_l(d)/x_l)$, $\tilde\nu=(\omega'_1(d)-
\rho_d\cdot x_1,\ldots ,\omega'_s(d)-\rho_d\cdot x_s)$, and $\tilde\nu'=
\Delta_m \tilde\nu$. We take also $\omega'_l(d+1)=\omega_l(d+1)+\nu'_l$ 
where $\nu'_l$ is the $l$-th component of the vector $\tilde\nu'$, 
$l=1,\ldots ,s$. Analogously to inequality~(\ref{sum_j\stasum}),
in this case we have the inequality
$$
\begin{array}{c}
\displaystyle
\sum_{l=1}^s \left(\omega'_l(d)\cdot |{\cal F}_m^{(w_l)}(n-d)|+
\omega_l(d+1)\cdot |{\cal F}_m^{(w_l)}(n-d-1)|\right)\\
\displaystyle
\le \rho_d\cdot S_m^{\langle k\rangle}(n-d)+\sum_{l=1}^s \omega'_l(d+1)\cdot 
|{\cal F}_m^{(w_l)}(n-d-1)|.
\end{array}
$$
This inequality implies that inequality~(\ref{sum_jd_1})
holds for every~$d$. For $d=\endsum$ we take $\rho_{\endsum}=\max_{1\le l\le s} 
(\omega'_l(\endsum)/x_l)$. Thus,
$$
\sum_{d=\stasum}^{\endsum} \; \sum_{l=1}^s \omega_l(d)\cdot |{\cal F}_m^{(w_l)}(n-d)|\le
\sum_{d=\stasum}^{\endsum} \rho_d\cdot S_m^{\langle k\rangle}(n-d)
$$
which implies
\begin{equation}
\sum_{j=p_0}^{p_2}\sum_{i=1}^s x_i |{\cal H}_j^{(w_i)}(n+1)| \le
\sum_{d=\stasum}^{\endsum} \rho_d\cdot S_m^{\langle k\rangle}(n-d).
\label{sum_i_s1}
\end{equation}

\subsection{Upper bound for $|\hat {\cal H}^{(w_i)}(n+1)|$}

We estimate finally the sum $\sum_{i=1}^s x_i |\hat {\cal H}^{(w_i)}(n+1)|$.
For this purpose we denote by $\hat {\cal H}(n+1)$ the set $\bigcup_{i=1}^{\hat s}\hat {\cal H}^{(w_i)}(n+1)$
and by $\hat {\cal H}'(n+1)$ the set $\bigcup_{i=1}^s\hat {\cal H}^{(w_i)}(n+1)$.
Note that the sets $\hat {\cal H}^{(w_i)}(n+1)$ are non-overlapping, so
$|\hat {\cal H}'(n+1)|=\sum_{i=1}^s |\hat {\cal H}^{(w_i)}(n+1)|$. Thus 
\begin{equation}
\sum_{i=1}^s x_i |\hat {\cal H}^{(w_i)}(n+1)|\le |\hat {\cal H}'(n+1)|\cdot\max_{i=1,\ldots,s} x_i.
\label{onhatH}
\end{equation}
Moreover, since by Proposition~\ref{rarefy} any word from $\hat {\cal H}(n+1)$ is rarified and $n+1>k-1$,
for any word from $\hat {\cal H}(n+1)$ there exists a single word from $\hat {\cal H}'(n+1)$ 
which is isomorphic to this word, and for any word from $\hat {\cal H}'(n+1)$ there exist 
exactly $k!$ different words from $\hat {\cal H}(n+1)$ which are isomorphic to this word.
So $|\hat {\cal H}(n+1)|=k!|\hat {\cal H}'(n+1)|$.

Let $v$ be an arbitrary word from $\hat {\cal H}(n+1)$. Then for~$v$
we have
$$
v[n-\left\lfloor \frac{\lambda (v)}{k-1}\right\rfloor +1 : n+1]=v[n'-\left\lfloor \frac{\lambda (v)}{k-1}\right\rfloor +1 : n'+1]
$$
where $n'=n-\lambda (v)$. Thus the word~$v$ is determined uniquely by the number 
$\lambda (v)$ and the prefix $v[1:n-\lfloor \lambda (v)/(k-1)\rfloor ]$. 
We denote this prefix by $\tau (v)$. Further we use the following fact.
\begin{lemma}
For any different $v', v''\in \hat {\cal H}(n+1)$ the prefixes $\tau (v'), \tau (v'')$
are also different.
\label{mainlemma}
\end{lemma}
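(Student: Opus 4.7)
The plan is to derive a contradiction from the assumption that two distinct words $v',v''\in\hat{\cal H}(n+1)$ satisfy $\tau(v')=\tau(v'')=\tau$. Since $v$ is recovered uniquely from the pair $(\tau(v),\lambda(v))$ (as explained in the paragraph preceding the lemma), we must have $\lambda(v')\ne\lambda(v'')$; write $p':=\lambda(v')<p'':=\lambda(v'')$ without loss of generality. Equality of prefix lengths forces $\ell:=\lfloor p'/(k-1)\rfloor=\lfloor p''/(k-1)\rfloor$, so $p'$ and $p''$ lie in the common interval $[(k-1)\ell,(k-1)\ell+k-2]$ and $d:=p''-p'$ satisfies $1\le d\le k-2$.

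Let $u'$ and $u''$ denote the length-$p'$ and length-$p''$ suffixes of $\tau$; then $u''=x\cdot u'$ with $|x|=d$. The displayed periodicity relation preceding the lemma gives $v'[n-\ell+j]=\tau[n-\ell-p'+j]$ and $v''[n-\ell+j]=\tau[n-\ell-p''+j]$ for $j=1,\ldots,\ell+1$, that is, the last $\ell+1$ letters of $v'$ and $v''$ are read off $\tau$ at offsets differing by $d$. Since $d<k-1$, Proposition~\ref{rarefy} forces these letters to disagree at \emph{every} such $j$, so $v'$ and $v''$ actually differ at each position in $\{n-\ell+1,\ldots,n+1\}$.

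My strategy is then to contradict the minimality of $h(v'')$ by producing a strictly shorter prohibited suffix of $v''$. The natural candidate is the length-$(p'+\ell+1)$ suffix
$$s\;:=\;v''[n+2-(p'+\ell+1):n+1]\;=\;u'\cdot u''[1:\ell+1].$$
A direct computation using $u''[j]=u'[j-d]$ for $j\in[d+1,p'']$ (which reflects the containment $u'\subset u''$) shows that in the case $d\le\ell$ the word $s$ already has period $p''$: one checks $s[i+p'']=u''[i+d]=u'[i]=s[i]$ for $i\in[1,\ell+1-d]$. One then applies a Fine--Wilf-type argument, combining this period $p''$ with the minimality constraints forced on the shorter periods of $u'=s[1:p']$ (from $\lambda(v')=p'$), to extract a smaller effective period of $s$ that pushes its exponent above $k/(k-1)$. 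In the complementary case $d>\ell$, the block $u''[1:\ell+1]$ sits entirely inside the prefix $x$, so $s=u'\cdot x[1:\ell+1]$ and one runs a parallel combinatorial argument directly on the alignment of $u'$ with the shift determined by $x$.

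The hardest part is carrying out this last deduction rigorously. It requires combining (i) the fact that $u'$, being the period of the minimally-periodic $h(v')$, cannot harbour short periods that divide $p'$; (ii) the corresponding minimality for $u''$ and $p''$; and (iii) the absence of prohibited factors in $\tau$, inherited from $v'[1:n]\in{\cal F}$, which outlaws the periodicities of $\tau$ that would otherwise ``ruin'' the argument. The case split $d\le\ell$ vs.\ $d>\ell$ is essentially forced by whether $p''$ is itself a useful nontrivial period of $s$, and handling the two cases uniformly is the main technical obstacle of the proof.
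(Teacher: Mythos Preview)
Your proposal has a genuine gap: the contradiction is never actually produced. Everything up to and including the observation that $s$ has period $p''$ when $d\le\ell$ is fine, but the exponent of $s$ with respect to $p''$ is $(p'+\ell+1)/p''$, and a quick calculation (using $\ell=\lfloor p'/(k-1)\rfloor$ and $d\ge 1$) shows this is at most $k/(k-1)$, so $p''$ alone does not make $s$ prohibited. You then invoke a ``Fine--Wilf-type argument'' combining $p''$ with ``minimality constraints on the shorter periods of $u'$'', but you never say what the second period is. The fact that $\lambda(v')=p'$ is the minimal period of $h(v')$ tells you nothing useful about periods of its length-$p'$ prefix $u'$: $u'$ may well have short periods that do not extend to $h(v')$. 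Without a concrete second period, Fine--Wilf has nothing to bite on, and the ``parallel combinatorial argument'' for $d>\ell$ is even less specified. As written, the proof stops precisely at what you yourself call ``the hardest part''.

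The paper's proof is much shorter and avoids all of this. It does not try to build a shorter prohibited suffix at all. Instead, after normalising so that $\tau$ is trimmed (its last $k-1$ letters are $a_1\cdots a_{k-1}$), it looks only at positions $l+1$, $l+2$, $l+3$ (where $l=|\tau|$). Because $\tau$, $v'$, $v''$ are rarefied, each of $v'[l+1]$, $v''[l+1]$ is forced to lie in $\{a_1,a_k\}$; your own observation gives $v'[l+1]\ne v''[l+1]$, so the two cases $(a_1,a_k)$ and $(a_k,a_1)$ arise. In each case, pushing the periodicity relations \eqref{forv1}--\eqref{forv2} back into $\tau$ at positions $l-\lambda(v')+1,\ldots,l-\lambda(v')+4$ produces two equal letters in $\tau$ at distance at most $k-2$, contradicting rarefiedness. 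No Fine--Wilf, no exponent bookkeeping, and only three positions are ever examined. I would recommend abandoning the prohibited-suffix strategy and following that local argument.
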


\begin{proof}
Let $\tau (v')=\tau (v'')=u$ for some different $v', v''\in \hat {\cal H}(n+1)$.
Denote by~$l$ the length of~$u$. Note that $v', v''\in {\cal L}_m$, so $v'$, $v''$ and $u$
are rarefied by Proposition~\ref{rarefy}. Thus without loss of generality we can assume that 
$u$ is trimmed, i.e.
\begin{equation}
a_j=u[l-(k-1)+j]=v'[l-(k-1)+j]=v''[l-(k-1)+j]
\label{assumpt}
\end{equation} 
for $j=1,\ldots ,k-1$. As we noted above, the equalities $\tau (v')=\tau (v'')$
and $\lambda (v')=\lambda (v'')$ imply $v'=v''$. So $\lambda (v')\neq\lambda (v'')$.
Without loss of generality we assume that $\lambda (v')>\lambda (v'')$.
Since $n-l=\lfloor\lambda (v')/(k-1)\rfloor=\lfloor\lambda (v'')/(k-1)\rfloor$, 
we can assume moreover that $\lambda (v'')<\lambda (v')<\lambda (v'')+(k-1)$.
Note also that the inequality $\lfloor\lambda (v')/(k-1)\rfloor\ge 2$ follows
from $\lambda (v')\ge p_2+1\ge 2k-2$. So $l=n-\lfloor\lambda (v')/(k-1)\rfloor\le n-2$.
Recall that we have also
\begin{equation}
v'[l+1:n+1]=v'[l-\lambda (v')+1:n-\lambda (v')+1]=u[l-\lambda (v')+1:n-\lambda (v')+1],
\label{forv1}
\end{equation}
\begin{equation}
v''[l+1:n+1]=v''[l-\lambda (v'')+1:n-\lambda (v'')+1]=u[l-\lambda (v'')+1:n-\lambda (v'')+1].
\label{forv2}
\end{equation}

Suppose $v'[l+1]=v''[l+1]$. Then by equations (\ref{forv1}) and~(\ref{forv2}) we obtain
$u[l-\lambda (v')+1]=u[l-\lambda (v'')+1]$. Since 
$$
(l-\lambda (v'')+1)-(l-\lambda (v')+1)=\lambda (v')-\lambda (v'')\le k-2,
$$
this contradicts that $u$ is rarefied. So $v'[l+1]\neq v''[l+1]$. Since $v'$, $v''$
are rarefied, it is easy to note from~(\ref{assumpt}) that $v'[l+1]$ and $v''[l+1]$ 
can be either $a_1$ or $a_k$. So we have only two possible cases: $v'[l+1]=a_1$, $v''[l+1]=a_k$
or $v'[l+1]=a_k$, $v''[l+1]=a_1$. We consider these cases separately.

Let $v'[l+1]=a_1$ and $v''[l+1]=a_k$. Then it is easy to note that the symbol $v'[l+2]$ can be
only $a_k$. Thus, by equations (\ref{forv1}) and~(\ref{forv2}) we obtain
$u[l-\lambda (v')+1]=a_1$, $u[l-\lambda (v')+2]=a_k$ and $u[l-\lambda (v'')+1]=a_k$.
So $u[l-\lambda (v')+2]=u[l-\lambda (v'')+1]$. Since
$$
(l-\lambda (v'')+1)-(l-\lambda (v')+2)=\lambda (v')-\lambda (v'')-1<k-1
$$
and $u$ is rarefied, the only case we have to consider is $l-\lambda (v'')+1=l-\lambda (v')+2$,
i.e. $\lambda (v')-\lambda (v'')=1$ (in this case $u[l-\lambda (v')+2]$ and $u[l-\lambda (v'')+1]$ 
are the same letter in~$u$). Since $v''$ is rarefied, $v''[l+2]$ can be either $a_1$ or $a_2$.
If $v''[l+2]=a_1$, then by~(\ref{forv2}) we obtain $a_1=u[l-\lambda (v'')+2]=u[l-\lambda (v')+3]$. 
Thus we have in this case that $u[l-\lambda (v')+1]=u[l-\lambda (v')+3]$ which contradicts that 
$u$ is rarefied since $2<k-1$. Let $v''[l+2]=a_2$. Then it is easy to note that the symbol $v''[l+3]$ 
can be only $a_1$. Therefore, $a_1=u[l-\lambda (v'')+3]=u[l-\lambda (v')+4]$ by~(\ref{forv2}).
Thus we have that $u[l-\lambda (v')+1]=u[l-\lambda (v')+4]$ which contradicts again that 
$u$ is rarefied.

Let now $v'[l+1]=a_k$ and $v''[l+1]=a_1$. Then it is easy to note that the symbol $v''[l+2]$ 
can be only $a_k$. Thus, by equations (\ref{forv1}) and~(\ref{forv2}) we obtain
$u[l-\lambda (v')+1]=a_k$, $u[l-\lambda (v'')+1]=a_1$ and $u[l-\lambda (v'')+2]=a_k$. 
Since $u$ is rarefied, we have
$$
(l-\lambda (v'')+2)-(l-\lambda (v')+1)=\lambda (v')-\lambda (v'')+1\ge k-1.
$$
Thus $\lambda (v')-\lambda (v'')=k-2$ has to be valid in this case. Since $v'$ is rarefied,
we have also that $v'[l+2]$ can be either $a_1$ or $a_2$. If $v'[l+2]=a_1$, then 
$u[l-\lambda (v')+2]=a_1$ by~(\ref{forv1}). Since $u[l-\lambda (v'')+1]=a_1$ and
$$
(l-\lambda (v'')+1)-(l-\lambda (v')+2)=\lambda (v')-\lambda (v'')-1=k-3<k-1,
$$
this contradicts that $u$ is rarefied. Let $v'[l+2]=a_2$. It is easy to note that in
this case the symbol $v'[l+3]$ can be only $a_1$. Therefore, $u[l-\lambda (v')+3]=a_1$ 
by~(\ref{forv1}). Taking into account that $u[l-\lambda (v'')+1]=a_1$ and $k\ge 5$, we
obtain again a contradiction with the fact that $u$ is rarefied, so the lemma is proved. 
\end{proof}

Note that for any word $v\in \hat {\cal H}(n+1)$ we have $\tau (v)\in {\cal F}_m$ and
$n-\lfloor n/k\rfloor\le |\tau (v)|\le n-\lfloor (p_2+1)/(k-1)\rfloor$, i.e.
$\tau (v)\in {\cal Q}(n+1)=\bigcup_{j=n-\lfloor n/k\rfloor}^{n-\lfloor (p_2+1)/(k-1)\rfloor} {\cal F}_m(j)$.
So from Lemma~\ref{mainlemma} we obtain that $|{\cal Q}(n+1)|\ge |\hat {\cal H}(n+1)|=
k!|\hat {\cal H}'(n+1)|$. Denote by ${\cal Q}'(n+1)$ the set of all trimmed words
from ${\cal Q}(n+1)$. Since by Proposition~\ref{rarefy} any word from ${\cal Q}(n+1)$ 
is rarified and has the length greater than $p_2>k-1$, for any word from ${\cal Q}(n+1)$ 
there exists a single word from ${\cal Q}'(n+1)$ which is isomorphic to this word, 
and for any word from ${\cal Q}'(n+1)$ there exist exactly $k!$ different words 
from ${\cal Q}(n+1)$ which are isomorphic to this word. So 
$|{\cal Q}(n+1)|=k!|{\cal Q}'(n+1)|$. Thus $|{\cal Q}'(n+1)|\ge |\hat {\cal H}'(n+1)|$.
Note that actually ${\cal Q}'(n+1)=\bigcup_{j=n-\lfloor n/k\rfloor}^{n-\lfloor (p_2+1)/(k-1)\rfloor}
\bigcup_{i=1}^s {\cal F}_m^{(w_i)}(j)$ and, since all sets ${\cal F}_m^{(w_i)}(j)$
are non-overlapping, 
$$
|{\cal Q}'(n+1)|=\sum_{j=n-\left\lfloor \frac{n}{k}\right\rfloor}^{n-\left\lfloor \frac{p_2+1}{k-1}\right\rfloor}
\sum_{i=1}^s |{\cal F}_m^{(w_i)}(j)|\le \sum_{j=n-\left\lfloor \frac{n}{k} \right\rfloor}^{n-\left\lfloor \frac{p_2+1}{k-1}\right\rfloor} {S_m^{\langle k\rangle}(j)}/{(\min_{i=1,\ldots,s} x_i)}.
$$
Thus, taking into account~(\ref{onhatH}), we obtain

\begin{equation}
\begin{array}{c}
\displaystyle
\sum_{i=1}^s x_i |\hat {\cal H}^{(w_i)}(n+1)|\le |\hat {\cal H}'(n+1)|\cdot
\max_{i=1,\ldots,s} x_i\le |{\cal Q}'(n+1)|\cdot\max_{i=1,\ldots,s} x_i\\
\displaystyle
\le (\max_{i=1,\ldots,s} x_i)\sum_{j=n-\left\lfloor \frac{n}{k}\right\rfloor}^{n-\left\lfloor \frac{p_2+1}{k-1}\right\rfloor}
{S_m^{\langle k\rangle}(j)}/{(\min_{i=1,\ldots,s} x_i)}=
\mu \sum^{\left\lfloor \frac{n}{k}\right\rfloor}_{d=\left\lfloor \frac{p_2+1}{k-1}\right\rfloor}S_m^{\langle k\rangle}(n-d).
\end{array}
\label{sum_i_send}
\end{equation}

\subsection{Getting a lower bound for $\gamma^{\langle k\rangle}$}

Summing up estimation (\ref{sum_i_send}) with relation (\ref{sum_i_s1}), we conclude that
\begin{equation}
\begin{array}{c}
\displaystyle
\sum_{i=1}^s x_i |{\cal H}^{(w_i)}(n+1)|\le \sum_{d=\stasum}^{\endsum} \rho_d\cdot 
S_m^{\langle k\rangle}(n-d)
\displaystyle
+\mu \sum^{\left\lfloor \frac{n}{k}\right\rfloor}_{d=\left\lfloor \frac{p_2+1}{k-1}\right\rfloor} S_m^{\langle k\rangle}(n-d).
\end{array}
\label{sum_i_all}
\end{equation}

For the sake of convenience we denote by ${\cal P}(z)$ the 
polynomial $\sum_{d=\stasum}^{\endsum} \rho_d\cdot z^d$ in a variable~$z$.
Suppose for some $\alpha >1$ we have
\begin{equation}
S^{\langle k\rangle}_m(n) \ge \alpha^d \cdot S^{\langle k\rangle}_m(n-d) 
\label{pri}
\end{equation}
for each $d=1, 2,\ldots , n-m$. Then relation~(\ref{sum_i_all}) implies that 
\begin{eqnarray*}
\sum_{i=1}^s x_i |\hat {\cal H}^{(w_i)}(n+1)| &\le & 
S^{\langle k\rangle}_m(n) \sum_{d=\stasum}^{\endsum}\frac{\rho_d}{\alpha^d}
+\mu S^{\langle k\rangle}_m(n) 
\sum^{\left\lfloor \frac{n}{k}\right\rfloor}_{d=\left\lfloor \frac{p_2+1}{k-1}\right\rfloor}\frac{1}{\alpha^d}\\
&<& S^{\langle k\rangle}_m(n)\left({\cal P}(\frac{1}{\alpha})
+\mu\sum^{\infty}_{d=\left\lfloor \frac{p_2+1}{k-1}\right\rfloor}
\frac{1}{\alpha^d}\right)\\
&=& S^{\langle k\rangle}_m(n)
\left({\cal P}(\frac{1}{\alpha})+\frac{\mu}{\alpha^q (\alpha -1)}
\right)
\end{eqnarray*}
where $q=\left\lfloor \frac{p_2+1}{k-1}\right\rfloor -1$. Using this estimation and equalities (\ref{Fwin}) 
and~(\ref{rSmn}), we obtain
$$
\begin{array}{c}
\displaystyle
S^{\langle k\rangle}_m(n+1)=\sum_{i=1}^s x_i\cdot |{\cal G}^{(w_i)}(n+1)|-
\sum_{i=1}^s x_i |\hat {\cal H}^{(w_i)}(n+1)|\\
\displaystyle
>S^{\langle k\rangle}_m(n)\cdot
\left(r-{\cal P}(\frac{1}{\alpha})-\frac{\mu}{\alpha^q (\alpha -1)}\right).
\end{array}
$$
Therefore, if $\alpha$ satisfy the inequality
$$
r-{\cal P}(\frac{1}{\alpha})-\frac{\mu}{\alpha^q (\alpha -1)}
\ge\alpha,
$$
we obtain the inequality $S^{\langle k\rangle}_m({n+1})\ge \alpha S^{\langle k\rangle}_m(n)$, and thus
$S^{\langle k\rangle}_m(n+1) \ge \alpha^d \cdot S^{\langle k\rangle}_m(n+1-d)$ holds for any $d=1,2,\ldots, n-m+1$.
If inequalities~(\ref{pri}) hold for some $n'$, then inequalities~(\ref{pri}) 
hold inductively in this case for every $n\ge n'$.
Thus we have $S^{\langle k\rangle}_m(n)=\Omega (\alpha^n)$. 
Since, obviously, the order of growth 
of $S^{\langle k\rangle}(n)$ is not less than $S^{\langle k\rangle}_m(n)$, we then 
conclude that $S^{\langle k\rangle}(n)=\Omega (\alpha^n)$. Hence 
$\gamma^{\langle k\rangle}\ge\alpha$.

Note that for obtaining the bound $\gamma^{\langle k\rangle}\ge\alpha$ we have to prove initially that inequalities~(\ref{pri}) holds for $n'$. 
For these purposes we compute the exact values of $S^{\langle k\rangle}(n)$ for $n\le n_0$ by an enumeration of all Dejean's words of size at most $n_0$. 
The inequalities $S^{\langle k\rangle}_m({n+1})\ge \alpha S^{\langle k\rangle}_m(n)$ for $n_0<n\le kp_2/(k-1)$ could be verified 
in the same inductive way as described above with evident modifications following from the restriction $n\le kp_2/(k-1)$.

\section{Results}

Using the described method of estimating $\gamma^{\langle k\rangle}$, we obtained lower
bounds on $\gamma^{\langle k\rangle}$ for $5\le k \le 10$. The obtained bounds together 
with the parameters $m$, $n_0$, $p_1$, $p_2$ used in the computer computations of these bounds
are given in the following table. In this table we give also the upper bounds on 
$\gamma^{\langle k\rangle}$ we obtain with the method described in~\cite{Shur08}.
For the anti-dictionary ${\cal A}$, we take the set of all binary minimally forbidden 
words in the Pansiot's code (w.r.t. factor containment) of size at most~$q$.

\begin{center}
\small
\begin{tabular}{|c||c|c|c|c|c|c||c|c|c|}
\hline
$k$ & $m$ & $s$ & $n_0$ & $p_1$& $p_2$
& lower bound on $\gamma^{\langle k\rangle}$
& $q$
& ${\vert \cal A \vert}$
& upper bound on $\gamma^{\langle k\rangle}$\\
%& upp. bnd. in \cite{Shur08}\\ 
\hline
\hline
%5 12 : Aff marche pas
%5& 35& 551 &  & >160... & 900 & (phenom) & ... &1.158151\\
5&50 & 5287 & 150 & 183 & 600 & 1.153811 & 158 & 12783585 & 1.157895\\
% &1.158151\\
%inf todo phenom2
%
%5& 56& 12866 & 170& & 350& 1.154717& & & &1.158151\\
%
%up 1.15801451009 (134)
%up 1.1579692132374365319 (141)
%up 1.1579419069309 (146)
%up 1.157928654864401792733 (148)
%up 1.1579190683 (151)
%up 1.1579135946 (152)
%up 1.15790729975 (153) size:8645036
%up 1.15789795065496634858 (156 hagrid) size:11333568
%up 1.15789435050825 157 size:12783585
\hline
%6 14 : Aff marche pas
%6& 33& 1926& 120& & 270& 1.220299& 1.224837\\
%6 40 ??
6& 33& 1926& 100 & 125& 500& 1.223437 & 113 & 3946990 & 1.224695\\
% & 1.224837\\
%
%up 1.22470221544056198 (111)
%up 1.22469428727859057 (112)
\hline
%7 16: Aff devrait pas marcher
7 & 28 & 318 & 100 & 126 & 600 & 1.236409 & 114 & 2958045 & 1.236899\\
% & 1.236964 \\
%OK hagrid
%up  1.23689877863405885 (113 sur calcul1)
\hline
8& 18& 31 & 100 & 119 & 600& 1.234725  & 118 & 1399465 &  1.234843\\ 
%& 1.234864\\
%OK hagrid
%
%%8& 25& 139& 117& & 297& 1.229356& 1.234864\\
%8& 25& 139& & & 600& pas utile ? (meme eigen que 18) & 1.234864\\
%
%up  1.234844171261625587544 (113) 
%up  1.234843083448442303 (116 calcul1)
%up  1.234842725736074240759047 (117 calcul1)
\hline
9& 20& 42 & 100 & 123 & 600 & 1.246659  & 112 & 287646 & 1.246678\\ 
%&1.246685\\ 
%OK hagrid
%
%up  1.24667772889999301972 (111 phenom) size:287646 STOP
\hline
10& 22& 55 & 100 & 122 & 600 & 1.239287  & 115 & 65346 & 1.239308\\ 
%& 1.239310\\
%OK hagrid
%
%up  1.2393074885087728870767 (114 calcul1)
\hline
\end{tabular}
\end{center}

Comparing the obtained lower bounds with the  upper bounds on $\gamma^{\langle k\rangle}$ presented in the table, one can conclude 
that we have estimated $\gamma^{\langle\mbox{k}\rangle}$ for $5\le k\le 10$
with the precision of $0.005$.

\section{Conclusion}

In this paper we obtained lower bounds on $\gamma^{\langle k\rangle}$ for $5\le k\le 10$, 
but we believe that by the method proposed in the paper lower bounds on $\gamma^{\langle k\rangle}$
could be computed for any fixed $k\ge 5$ (provided that $\gamma^{\langle k\rangle}>1$).
So we consider as an interesting problem for further investigations the question if
the computations described in the paper can be generalized theoretically for obtaining
theoretical lower bounds on $\gamma^{\langle k\rangle}$ valid for any $k\ge 5$.

\section*{Acknowledgments}

This work started when both authors were invited to LIAFA, University
Paris Diderot (Paris-7), France, in June 2009. R.Kolpakov acknowledges the partial
support of the Russian Foundation for Fundamental Research (Grant 08--01--00863) 
and of the program for supporting Russian scientific schools (Grant NSh 5400.2006.1).

\end{document}